\newtheorem{defn}{Definition}
\newtheorem{example}{Example}
\newtheorem{prop}{Proposition}
\newtheorem{cor}{Corollary}
\newcommand{\rref}[2][]{\prettyref{#2}}
\newcommand{\ci}{\mathrel{\perp\mspace{-10mu}\perp}}
\newcommand{\inv}{{-1}}
\newcommand{\cD}{\mathcal{D}}
\newcommand{\cE}{\mathcal{E}}
\newcommand{\cF}{\mathcal{F}}
\newcommand{\cG}{\mathcal{G}}
\newcommand{\cM}{\mathcal{M}}
\newcommand{\cR}{\mathcal{R}}
\newcommand{\cV}{\mathcal{V}}
\newcommand{\bbR}{\mathbb{R}}
\newcommand{\bbP}{\mathbb{P}}
\DeclareMathOperator{\pa}{pa}
\DeclareMathOperator{\nbr}{ne}
\DeclareMathOperator{\descendants}{de}
\DeclareMathOperator{\pre}{pre}
\newcommand{\hatrho}{\hat{\rho}}
\newcommand{\hatz}{\hat{z}}
\newcommand{\hatTheta}{\hat{\Theta}}
\newcommand{\hatSigma}{\hat{\Sigma}}
\newcommand{\hatK}{\hat{K}}
\newcommand{\maximal}{\textrm{Maximal}}
\newcommand{\append}{\texttt{append}}
\newcommand{\Path}{\texttt{path}}
\newcommand{\paths}{\texttt{paths}}
\newcommand{\newpaths}{\texttt{new\_paths}}
\newcommand{\RFDStep}{\texttt{RFDStep}}
\newcommand{\RFD}{\texttt{RFD}}
\begin{document}

%

%

\twocolumn[

\aistatstitle{Efficient Permutation Discovery in Causal DAGs}

\renewcommand{\thefootnote}{\fnsymbol{footnote}}
\aistatsauthor{ Chandler Squires\textsuperscript{*} \And Joshua Amaniampong\textsuperscript{*} \And  Caroline Uhler }

\renewcommand{\thefootnote}{\arabic{footnote}}

\aistatsaddress{ MIT \And  MIT \And MIT } ]

\begin{abstract}
    The problem of learning a directed acyclic graph (DAG) up to Markov equivalence is equivalent to the problem of finding a permutation of the variables that induces the sparsest graph. 
    Without additional assumptions, this task is known to be NP-hard. 
    Building on the minimum degree algorithm for sparse Cholesky decomposition, but utilizing DAG-specific problem structure, we introduce an efficient algorithm for finding such sparse permutations.
    %
    %
    We show that on jointly Gaussian distributions, our method with depth $w$ runs in $O(p^{w+3})$ time. 
    We compare our method with $w = 1$ to algorithms for finding sparse elimination orderings of undirected graphs, and show that taking advantage of DAG-specific problem structure leads to a significant improvement in the discovered permutation. 
    We also compare our algorithm to \textit{provably consistent} causal structure learning algorithms, such as the PC algorithm, GES, and GSP, and show that our method achieves comparable performance with a shorter runtime. Thus, our method can be used on its own for causal structure discovery.
    Finally, we show that there exist dense graphs on which our method achieves almost perfect performance, so that unlike most existing causal structure learning algorithms, the situations in which our algorithm achieves both good performance and good runtime are not limited to sparse graphs.
\end{abstract}

\section{Introduction}\label{sec:introduction}

The discovery of causal structure, represented by a directed acyclic graph (DAG), from data has received much attention over the past two decades \citep{spirtes2000causation,chickering2002optimal,shimizu2006linear,hauser2012characterization,peters2014causal,solus2017consistency}, due to the ability of causal models to answer questions about the effect of hypothetical interventions, such as ``how will a new treatment affect a patient's diabetes", or ``how will a new housing law affect rent prices"? Methods for causal structure discovery exploit a number of patterns in the data, including (1) conditional independencies, as in the PC algorithm \citep{spirtes2000causation}, (2) asymmetries arising from nonlinearities, as in the LiNGAM algorithm \citep{shimizu2006linear}, or (3) data likelihood, as in GES \citep{chickering2002optimal}. While some of these causal discovery algorithms are \textit{provably consistent}, in the sense that given infinite data they converge to the correct causal model (or one that is \textit{equivalent} to it, consistency comes at a high computational price; e.g.~the complexity of PC and GES grows exponentially in the maximum indegree of the graph, so that these algorithms are infeasible to run on large, dense graphs. Since the problem of causal structure discovery is known to be NP-hard \citep{chickering2004large}, this motivates the development of alternative, \textit{approximate} methods, which scale to dense graphs while still providing insights about the true causal graph.

\textbf{Finding causal orderings via sparsity.} A number of recent algorithms \citep{peters2014causal, raskutti2018learning, solus2017consistency,yang2018characterizing,Wang_NIPS_2017,squires2019permutation} have utilized the fact that, given the true causal ordering between variables (i.e., an ordering that is consistent with the topological ordering of the causal DAG), the causal graph can be easily recovered from conditional independence statements. Motivated by this connection, we here develop a method for inferring the causal ordering by exploiting the relationships between conditional independencies in a novel way. Our new method, which we call \emph{Removal-Fill-Degree (RFD)}, can be seen as an extension of the Minimum-Degree (MD) algorithm for finding sparse elimination orderings of undirected graphs \citep{rose1972graph,heggernes2001computational}. MD and modifications thereof have been well-studied for the purpose of finding sparse \textit{Cholesky decompositions}, which have a wide array of applications in numerical linear algebra \citep{rothberg1994efficient}. While these algorithms have focused on undirected graphs, in this paper, we introduce a number of modifications that are natural when extending to DAGs.

\textbf{Organization of the paper.}
In \rref{sec:background}, we review background on graphical models, as well as related work on DAG structure learning and algorithms for finding sparse Cholesky decompositions. 
In \rref{sec:theory}, we introduce new concepts and theoretical results which motivate our method. 
In \rref{sec:methods}, we introduce our \RFD~algorithm for finding permutations which induce sparse causal graphs, and evaluate its runtime.
We also describe a construction for a dense graph on which the \RFD~algorithm performs well, demonstrating that the computational efficiency of our algorithm does not limit it to perform well only on sparse graphs.
In \rref{sec:empirical}, we compare the \RFD~method to other methods on the tasks of causal structure discovery from (1) noiseless data and (2) noisy data.

\section{Background and Related Work}\label{sec:background}

\textbf{UGs, DAGs and IMAPs.} Given a graph over nodes $[p] := \{ 1, \ldots p \}$, we associate to each node $i$ a random variable $X_i$. In an undirected graph (UG) $\cG$, we write $i \ci_\cG j \mid S$ whenever two nodes $i$ and $j$ are \textit{separated} given $S$. Similarly, in a directed acyclic graph (DAG) $\cD$, we write $i \ci_\cD j \mid S$ whenever two nodes $i$ and $j$ are \textit{d-separated} given $S$ \citep{koller2009probabilistic}. We say that a distribution $\bbP$ is \emph{Markov} to a UG (DAG) if whenever $i$ and $j$ are (d-)separated given $S$ in the graph, we have $X_i \ci_\bbP X_j \mid X_S$. If $i$ and $j$ are not (d-) separated given $S$, we call them (d-)connected given $S$, and write $i \not\ci_\cD j \mid S$. See \cite{koller2009probabilistic} for a review of separation and connection statement in graphical models. We use $\pa_\cD(i)$, $\descendants_\cD(i)$ to refer to the \textit{parents} and \textit{descendants} of $i$ in a directed graph $\cD$. We use $\nbr_\cG(i)$ to refer to the neighbors of $i$ in an undirected graph; if $i$ and $j$ are neighbors in $\cG$ then we write $i \sim_\cG j$. We take for all of these terms the standard definitions from \cite{lauritzen1996graphical}. Given a path with consecutive nodes $(i, j, k)$, there is a \textit{collider} on $j$ in the path if $i \to j \leftarrow k$.

A graph $\cG'$ is an \textit{independence map (IMAP)} of a graph $\cG$ if every conditional independence statement entailed by $\cG'$ is also entailed by $\cG$. We denote an \emph{induced subgraph} of $\cG$ on the vertices $V$ by $\cG[V] = \{ i - j \mid i - j \in \cG, i, j \in V \}$. An IMAP $\cG'$ of $\cG$ is \textit{minimal} if no induced subgraph of $\cG'$ is an IMAP of $\cG$. Two DAGs $\cD$ and $\cD'$ are \emph{Markov equivalent} when they entail the same set of d-separation statements. Given a DAG $\cD$, for any permutation $\pi$ of $V(\cD)$, the graph
$$
\cD_\pi = \{ i \rightarrow j \mid i <_\pi j, i \not\ci_\cD j \mid \pre_\pi(j) \}
$$
is a minimal IMAP of $\cD$ \citep{verma1990causal}, where $\pre_\pi(j) = \{ k \mid k <_\pi j \}$.

\textbf{Ordering-Based Algorithms.} \cite{raskutti2018learning} established a fundamental connection between causal structure learning and the problem of finding permutations which induce sparse minimal IMAPs.
Based on this connection, \cite{solus2017consistency} introduced the \textit{Greedy Sparsest Permutation} (GSP) algorithm, which performs a greedy search over the space of permutations, searching for a graph with the minimum number of edges.
In order to establish high-dimensional consistency guarantees, the authors relied on a good choice of an initial permutation in order to limit the steps (and thus, the number of hypothesis tests) required to find the optimal permutation.

\textbf{Finding Sparse Elimination Orderings.}
As a heuristic for discovering a good initial permutation, \cite{solus2017consistency} proposed the use of the \textit{Minimum-Degree} (\textit{MD}) algorithm. 
The MD algorithm was designed for the problem of sparse Cholesky decomposition, i.e., finding a permutation of a matrix such that the lower diagonal component in its Cholesky decomposition is sparse.
Given an undirected graph and an order of the nodes, \textit{vertex elimination} is the process of iteratively removing each node in the order and connecting its neighbors.
Initial work on sparse Cholesky decompositions \citep{rose1972graph} provided a connection between finding sparse Cholesky decompositions and finding elimination orderings that introduce few edges.
Further work on this problem included improvements to time and space complexity, versions removing multiple nodes at a time \citep{liu1985modification}, and versions using an approximation of the degree \citep{amestoy1996approximate}.
See \cite{george1989evolution} or \cite{heggernes2001computational} for an extensive survey.

Since the MD algorithm and its current extensions are designed for undirected graphs, these methods fail to exploit patterns that are helpful for discovering topological orderings of DAGs, which we describe in this paper. 
Moreover, these methods assume that the original matrix is sparse, i.e., has entries exactly equal to zero.
However, in many applications, including causal structure learning, the matrix may be the result of some noisy process which only induces entries that are \textit{approximately} equal to zero. 
We provide details on how to efficiently incorporate conditional independence testing into methods for finding sparse elimination orderings.

\begin{figure}[t]
    \centering
    \includegraphics[width=\linewidth]{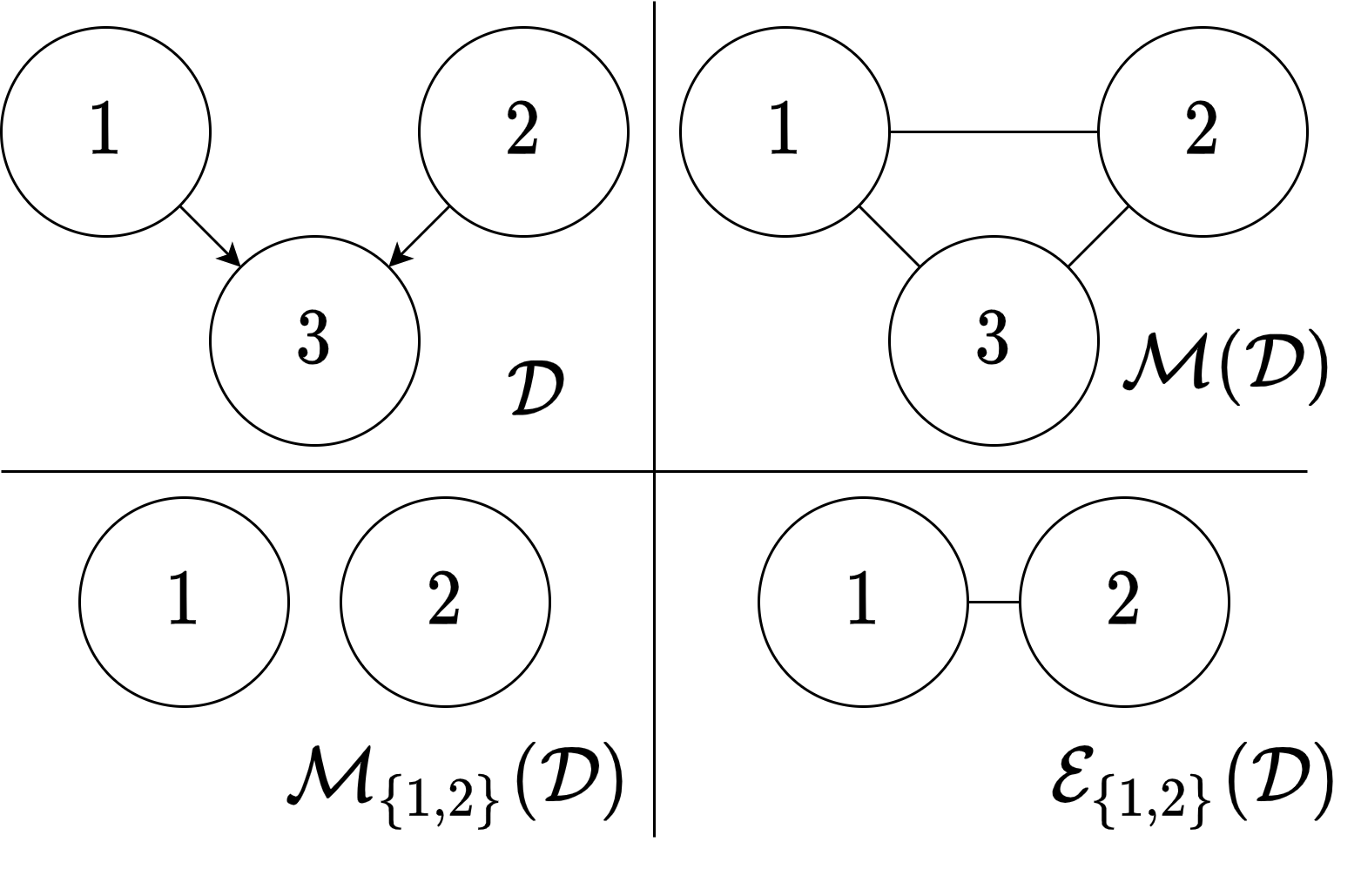}
    \caption{A DAG $\cD$, its moral graph $\cM(\cD)$, a moral subgraph $\cM_{\{1,2\}}(\cD)$, and elimination graph $\cE_{\{1,2\}}(\cD)$.}
    \label{fig:moral-subgraph}
\end{figure}

\section{Theoretical Results}\label{sec:theory}

The interplay between DAGs and undirected graphs will be central to our algorithm. Recall that given a DAG $\cD$, its \textit{moral graph} $\cM(\cD)$ is the unique undirected minimal IMAP of $\cD$ \citep{koller2009probabilistic}. We extend this concept to arbitrary subsets $\cV$ of the vertices $[p]=\{1,\dots , p\}$ of a DAG $\cD$:

\begin{figure*}
    \begin{subfigure}{0.16\linewidth}
        \centering
        \includegraphics[width=.9\textwidth]{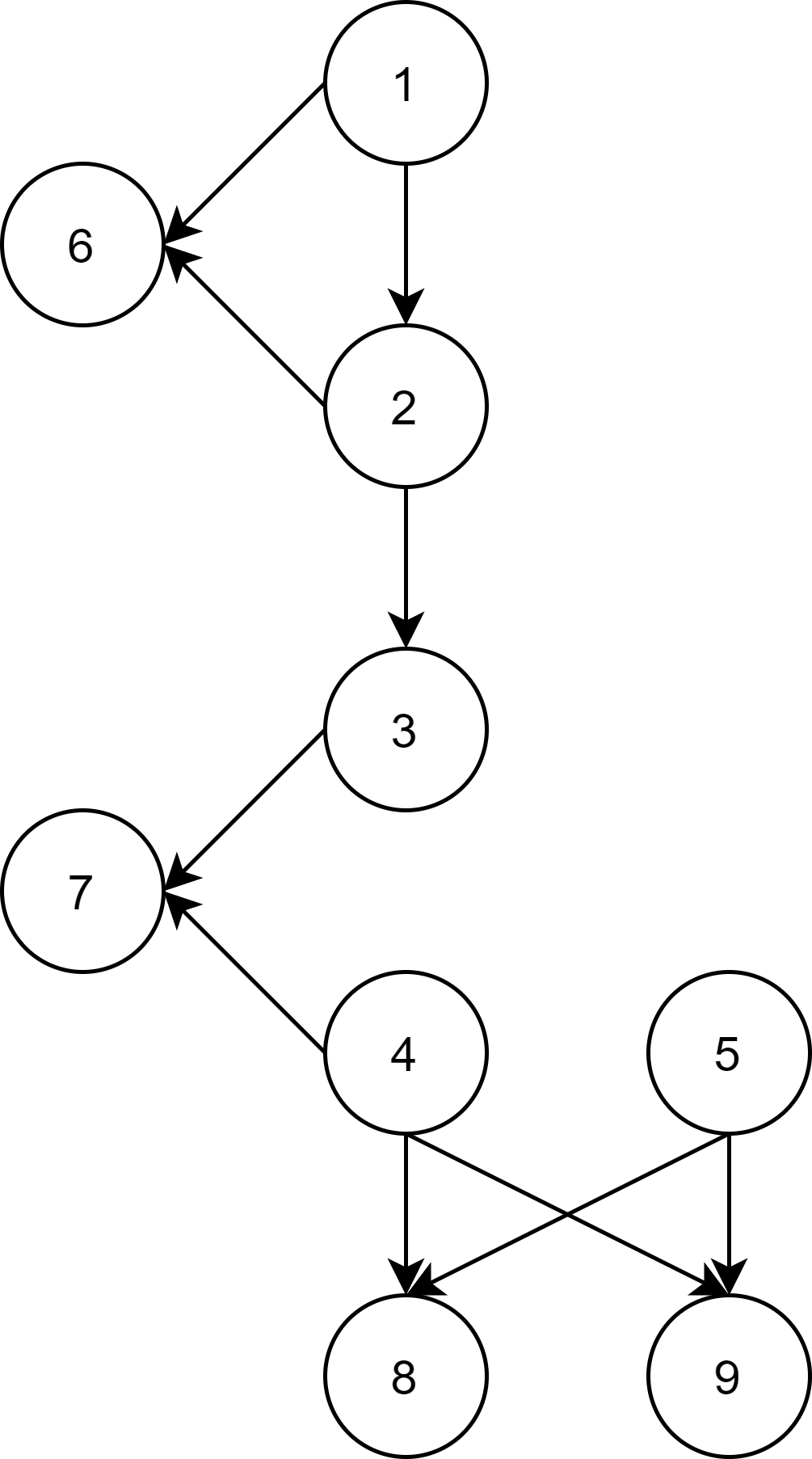}
        \caption{$\cD$}
        \label{fig:dag}
    \end{subfigure}
    \begin{subfigure}{0.16\linewidth}
        \centering
        \includegraphics[width=.9\textwidth]{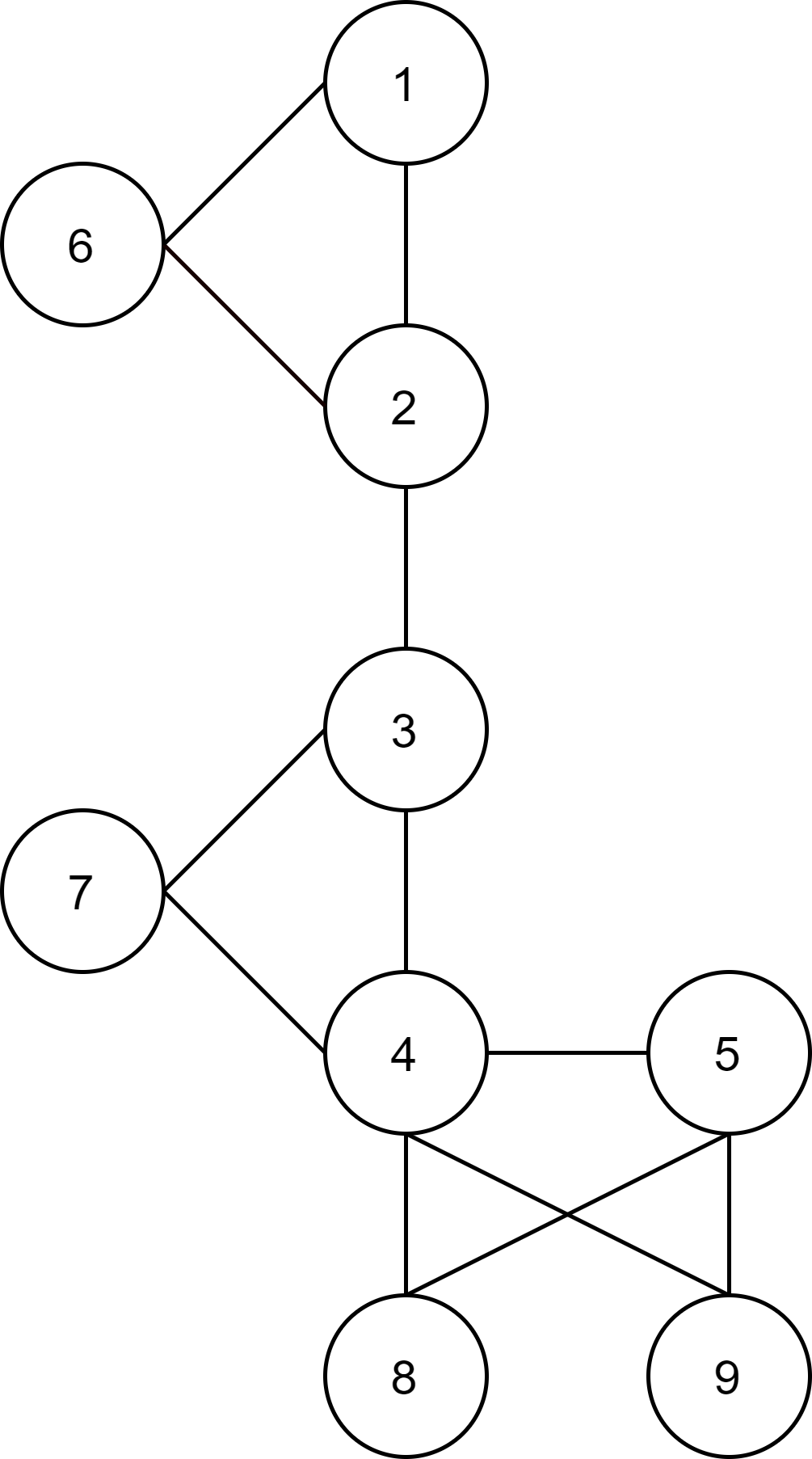}
        \caption{$\cM(\cD)$}
        \label{fig:mug}
    \end{subfigure}
    \begin{subfigure}{0.16\linewidth}
        \centering
        \includegraphics[width=.9\textwidth]{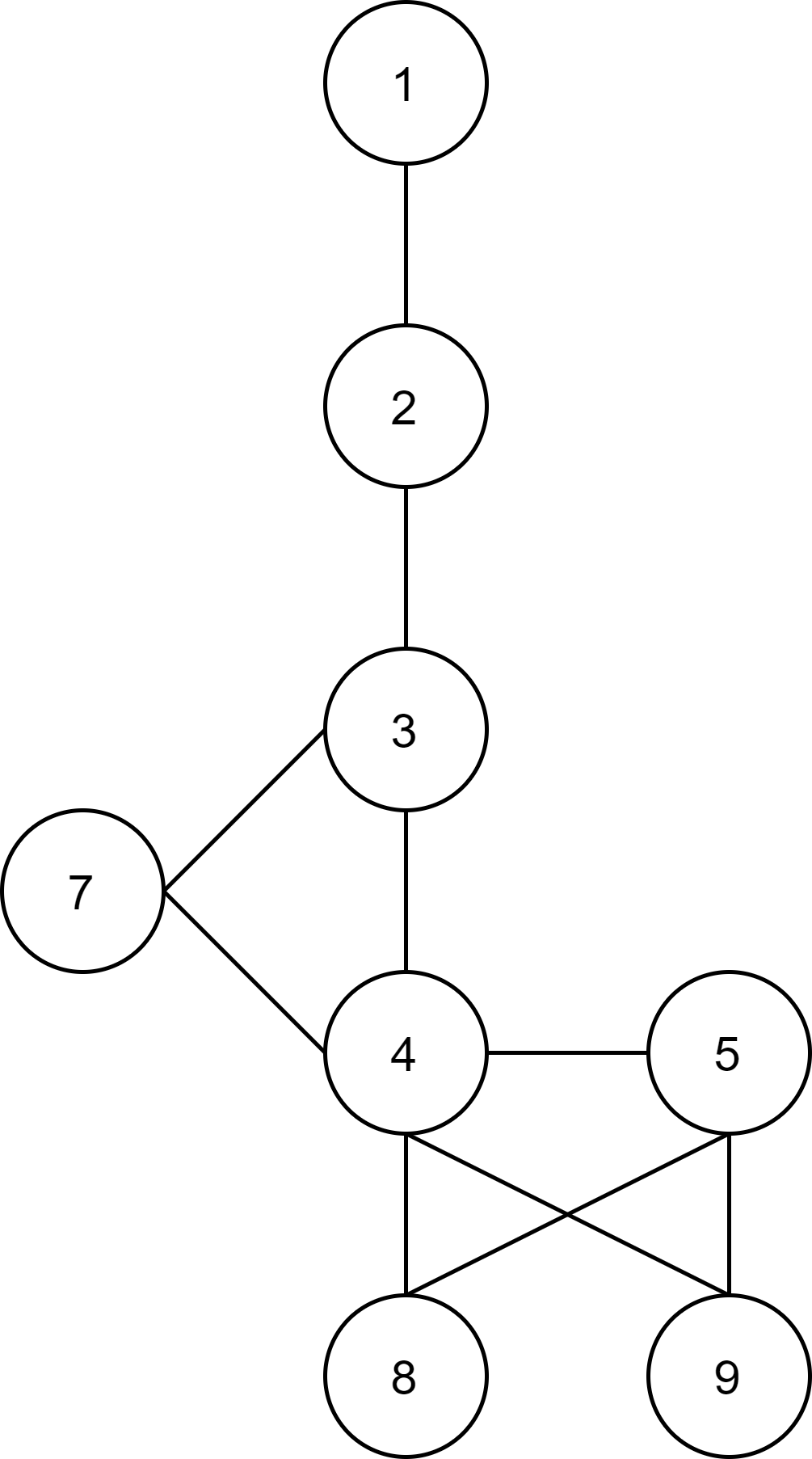}
        \caption{$\cM_{[9] \setminus 6}(\cD)$}
        \label{fig:mug6}
    \end{subfigure}
    \begin{subfigure}{0.16\linewidth}
        \centering
        \includegraphics[width=.9\textwidth]{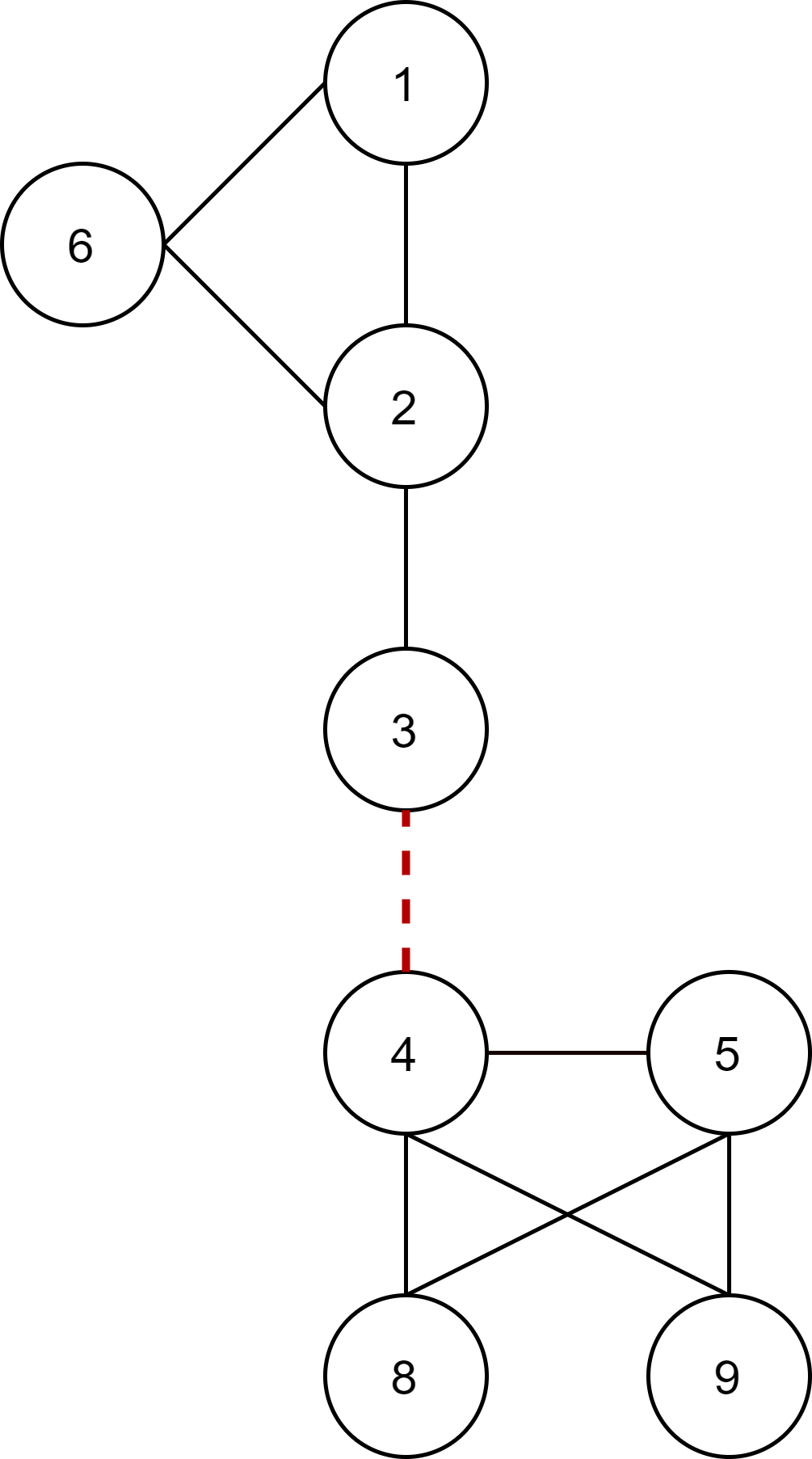}
        \caption{$\cM_{[9] \setminus 7}(\cD)$}
        \label{fig:mug7}
    \end{subfigure}
    \begin{subfigure}{0.16\linewidth}
        \centering
        \includegraphics[width=.9\textwidth]{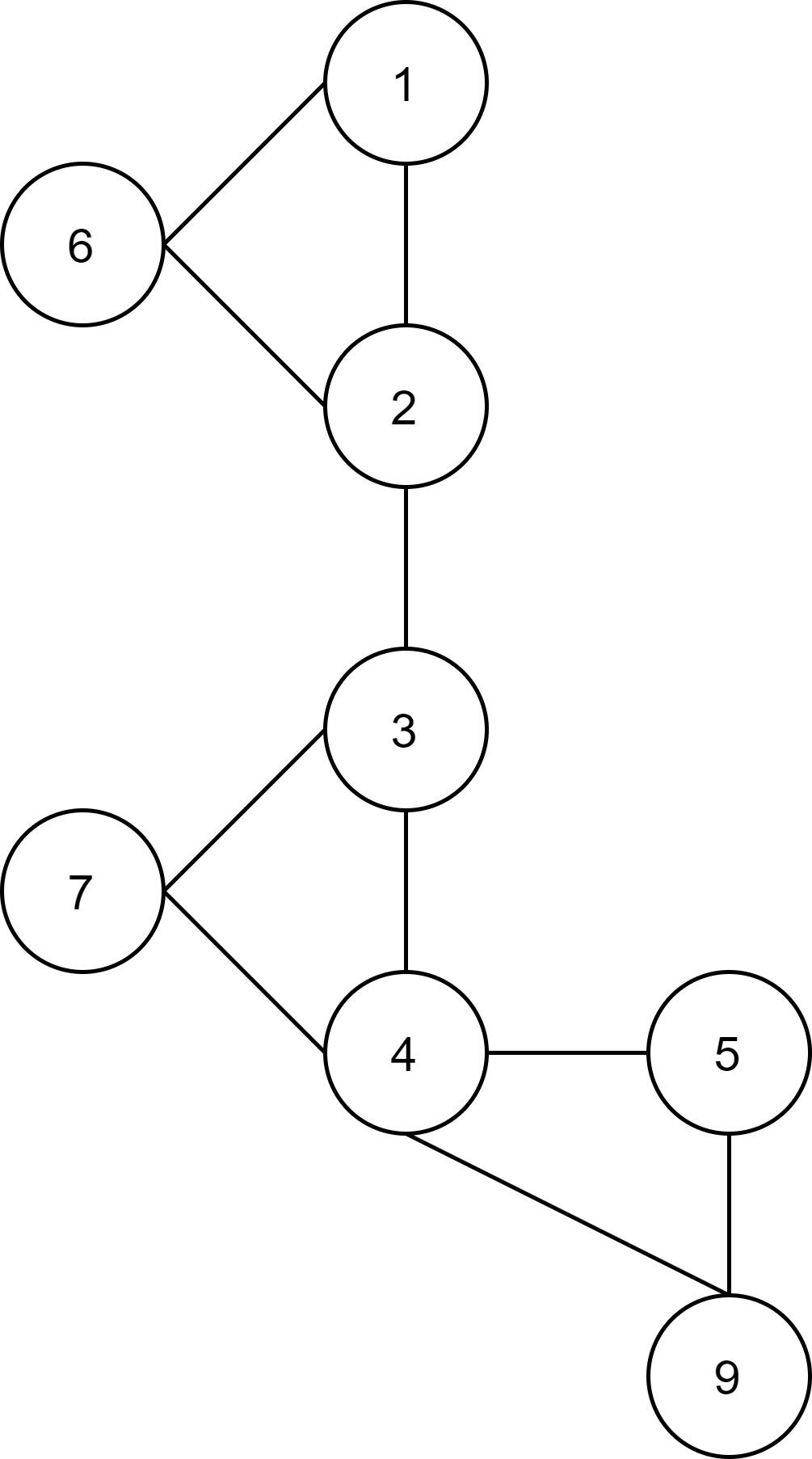}
        \caption{$\cM_{[9] \setminus 8}(\cD)$}
        \label{fig:mug8}
    \end{subfigure}
    \begin{subfigure}{0.16\linewidth}
        \centering
        \includegraphics[width=.9\textwidth]{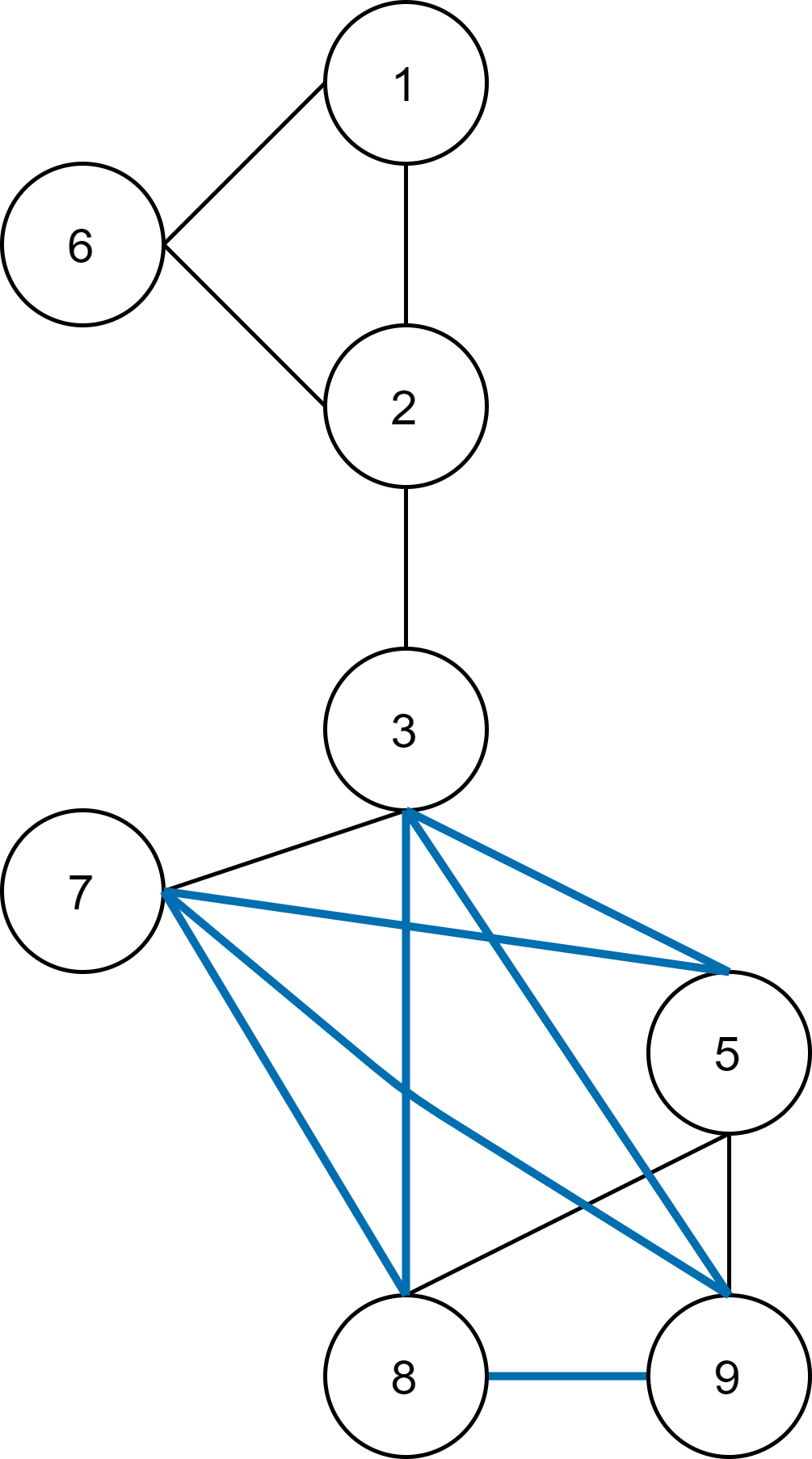}
        \caption{$\cM_{[9] \setminus 4}(\cD)$}
        \label{fig:mug4}
    \end{subfigure}
    
    \caption{A DAG $\cD$ and several of its moral subgraphs. Blue lines indicate fill edges, and dotted red lines indicate removed edges.}
    \label{fig:mug-examples}
\end{figure*}

\begin{defn} The \emph{moral subgraph} $\cM_{\cV}(\cD)$ of a DAG $\cD = \left([p],\cE\right)$ over vertices $\cV\subset [p]$ is the undirected graph  with vertex set $\cV$ and edge set
\[ 
\{ i - j \mid i, j \in \cV, i \not \ci_\cD j \mid \cV \setminus \{ i, j \} \}.
\]
\end{defn}

As with the moral graph, the moral subgraph is the unique undirected minimal IMAP of the marginal distribution $\bbP_\cV$. 

\textbf{The elimination graph.} The moral subgraph is closely related to the \textit{elimination graph} that has previously been studied in the sparse elimination ordering literature. 
The elimination graph for a given elimination ordering is defined by successively removing each node in the ordering and connecting its neighbors. 
Equivalently, the elimination graph can be described through the moral graph as follows: 
%
$\cV$ is $$\cE_\cV(\cD) = \{ i - j \mid i \not\ci_{\cM(\cD)} \cV \setminus \{ i, j \} \}.$$

Figure~\ref{fig:moral-subgraph} illustrates that the moral subgraph and elimination graph do not necessarily coincide; a notable exception is the case of chordal graphs when following a \textit{perfect elimination ordering} (see e.g.~\cite{vandenberghe2015chordal} for an excellent overview).

\textbf{Fill and removal edges.} Given a moral subgraph $\cM_\cV(\cD)$, we consider the effect of marginalizing out a node $k$. Since we will frequently add and remove single elements from sets, we let $\cV \setminus k := \cV \setminus \{ k \}$ and $\cV \cup k := \cV \cup \{ k \}$ for short. Marginalizing $k$ from $\cV$ results in the new moral subgraph $\cM_{\cV \setminus k}(\cD)$. Unlike in the elimination algorithm, removing a vertex does not only add edges, but may result in the removal of some edges. The edges removed and added after removing node $k$ are captured in the following definitions:

\begin{defn}
The \emph{removed edge set} of vertex $k$, over vertices $\cV$, is
\[
\cR_\cD(\cV, k) = \cM_\cV(\cD)[\cV \setminus k] \setminus \cM_{\cV \setminus k}(\cD).
\]
The \emph{removal score} of  $k$ on $\cV$ is
$R_\cD(\cV, k) = |\cR_\cD(\cV, k)|$.
\end{defn}

\begin{defn}
The \emph{fill edge set} of vertex $k$, over vertices $\cV$, is
\[
\cF_\cD(\cV, k) = \cM_{\cV \setminus k}(\cD) \setminus \cM_{\cV}(\cD)[\cV \setminus k ].
\]
The \emph{fill score} of $k$ over $\cV$ is
$
F_\cD(\cV, k) = |\cF_\cD(\cV, k)|.
$
\end{defn}

The following example demonstrates these definitions.

\begin{example}\label{ex:intuition}
\rref{fig:mug-examples} shows a DAG $\cD$ (\rref{fig:dag}), its moral graph $\cM(\cD)$ (\rref{fig:mug}), and several of its moral subgraphs. \rref{fig:mug6} shows that removing a vertex may have no effect on the removal or fill score. \rref{fig:mug7} shows that removing a vertex can lead to the removal of an edge between its parents.  \rref{fig:mug8} shows that the lack of an edge between parents of a collider in $\cD$ is not sufficient for removal to occur, in this case because they share another child. Finally \rref{fig:mug4} is an example of how removing a vertex with descendants in $\cV$ may cause a significant amount of fill. 
\end{example}

Fill edges are closely related to the edges added to an undirected graph during the course of vertex elimination. In fact, Proposition~\ref{prop:fill-characterization} establishes that the fill edge set for a node is precisely the same as in vertex elimination.
The proof of this proposition is trivial in the case of multivariate Gaussians.
The \textit{marginal precision matrix} (i.e., precision matrix of the marginal distribution) over $\cV\subset [p]$ of a multivariate Gaussian with precision matrix $\Theta_{[p]}$ may be computed via the recursive formula 
$$
(\Theta_{\cV \setminus k})_{ij} = (\Theta_\cV)_{ij} - \frac{1}{\Theta_{kk}} (\Theta_\cV)_{ik} (\Theta_\cV)_{jk}.
$$

In a multivariate Gaussian, the conditional independence statement $X_i \ci X_j \mid \cV \setminus \{ i, j \}$ is equivalent to $(\Theta_{\cV})_{ij} = 0$. Thus, an edge $i - j$ is in the moral subgraph $\cD_\cV(\cD)$ if and only if $(\Theta_\cV)_{ij} = 0$.

Thus, we may conclude that if $(\Theta_{\cV \setminus k})_{ij} \neq 0$ and $(\Theta_\cV)_{ij} = 0$, then $(\Theta_\cV)_{ik} \neq 0$ and $(\Theta_\cV)_{jk} \neq 0$, i.e., if $i - j \in \cR_\cD(\cV, k)$, then $i - k$ and $j - k$.
Proposition~\ref{prop:fill-characterization} establishes the corresponding result in the general, non-parametric case.

\begin{prop}\label{prop:fill-characterization}
The fill edge set of node $k$ over nodes $\cV$ is equal to the neighbors of $k$ in $\cM_\cV(\cD)$ which are not themselves adjacent, i.e.,
\[
\cF_\cD(\cV, k) = \{ i, j \in \nbr_{\cM_\cV(\cD)}(k) \mid i \not\sim_{\cM_\cV(\cD)} j \}.
\]
\end{prop}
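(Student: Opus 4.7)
My plan is to prove both inclusions separately: $(\subseteq)$ by direct d-separation reasoning and $(\supseteq)$ by appealing to the Lauritzen--Spiegelhalter moralization criterion, namely that $i \not\ci_\cD j \mid S$ iff $i$ and $j$ are connected by a path avoiding $S$ in the undirected graph $\cM(\cD[\an_\cD(\{i,j\} \cup S)])$.

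For $(\subseteq)$, suppose $i - j \in \cF_\cD(\cV, k)$, so $i \ci_\cD j \mid \cV \setminus \{i,j\}$ and $i \not\ci_\cD j \mid \cV \setminus \{i,j,k\}$. The first condition gives $i \not\sim_{\cM_\cV(\cD)} j$ immediately. For the neighbor conditions, I would fix any d-connecting path $\pi$ from $i$ to $j$ given $\cV \setminus \{i,j,k\}$. Since $\pi$ must become blocked after we add $k$ to the conditioning set, and the only way adding a vertex blocks a path is by being a non-collider on it, $k$ lies on $\pi$ as a non-collider. The sub-paths $\pi_1$ from $i$ to $k$ and $\pi_2$ from $k$ to $j$ inherit d-connection given $\cV \setminus \{i,j,k\}$, and because $\pi$ is simple, $j \notin \pi_1$ and $i \notin \pi_2$. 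Adding the missing endpoint back to the conditioning set cannot block a path that does not contain it, so $\pi_1$ witnesses $i \sim_{\cM_\cV(\cD)} k$ and $\pi_2$ witnesses $j \sim_{\cM_\cV(\cD)} k$.

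For $(\supseteq)$, the hypotheses provide paths $P_1$ from $i$ to $k$ and $P_2$ from $k$ to $j$ in $G := \cM(\cD[\an_\cD(\cV)])$ whose interior vertices lie in $\an_\cD(\cV) \setminus \cV$ (interiors must avoid $\cV \setminus \{i,k\}$ and $\cV \setminus \{j,k\}$ respectively, and cannot equal $i$, $j$, or $k$). Concatenating them and extracting a simple path gives an $i$-to-$j$ path in $G$ whose interior is contained in $\{k\} \cup (\an_\cD(\cV) \setminus \cV)$, and therefore avoids $\cV \setminus \{i,j,k\}$. To invoke the moralization criterion and conclude $i \not\ci_\cD j \mid \cV \setminus \{i,j,k\}$, I need this path to live in $\cM(\cD[\an_\cD(\cV \setminus k)])$; this moral graph coincides with $G$ precisely when $k \in \an_\cD(\cV \setminus k)$, i.e., when $k$ has a descendant in $\cV \setminus k$, which handles one case directly.

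The main obstacle is the complementary case, in which $k$ has no descendant in $\cV \setminus k$ and is therefore a sink in $\cD[\an_\cD(\cV)]$, so its $G$-neighborhood equals $\pa_\cD(k)$ with no moralization edges incident to $k$. My plan is to show that this case is incompatible with the hypothesis $i \not\sim_{\cM_\cV(\cD)} j$. The $G$-neighbors of $k$ on $P_1$ and $P_2$ are then both parents of $k \in \an_\cD(\cV)$, hence joined by a moralization edge in $G$; using this edge to bypass $k$ produces an $i$-to-$j$ walk in $G$ whose interior lies entirely in $\an_\cD(\cV) \setminus \cV$. Applying the moralization criterion with $S = \cV \setminus \{i,j\}$ then yields $i \not\ci_\cD j \mid \cV \setminus \{i,j\}$, contradicting the hypothesis and ruling this case out, which completes $(\supseteq)$.
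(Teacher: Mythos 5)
Your proposal is correct, and its two halves relate to the paper's proof differently. The forward inclusion is essentially the paper's own argument: both of you take a path that d-connects $i$ and $j$ given $\cV \setminus \{i,j,k\}$ but is blocked given $\cV \setminus \{i,j\}$, note that adding $k$ to the conditioning set can only block the path if $k$ is a non-collider on it, and then use the two sub-paths into $k$ to certify $i - k$ and $j - k$ in $\cM_\cV(\cD)$ (you are a bit more explicit than the paper about why adding back the missing endpoint cannot block a sub-path that avoids it). The reverse inclusion is where you genuinely diverge: the paper concatenates the d-connecting paths $\gamma_1$ (from $i$ to $k$) and $\gamma_2$ (from $k$ to $j$), argues from $i \not\sim_{\cM_\cV(\cD)} j$ that $k$ must be a non-collider on $\gamma_1 \circ \gamma_2$, and concludes that deleting $k$ from the conditioning set makes the concatenation d-connecting; you instead pass to the Lauritzen--Spiegelhalter criterion in the moral graph of the ancestral induced subgraph $\cM(\cD[\an_\cD(\cV)])$, where concatenating and rerouting undirected paths is unproblematic, and split on whether $k$ still has a descendant in $\cV \setminus k$. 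If it does, the ancestral sets for $\cV$ and $\cV \setminus k$ agree and your spliced path directly witnesses $i \not\ci_\cD j \mid \cV \setminus \{i,j,k\}$; if it does not, $k$ is a sink of $\cD[\an_\cD(\cV)]$, the neighbors of $k$ on the two paths are parents of $k$, and bypassing $k$ via their moralization edge gives a connection avoiding $\cV \setminus \{i,j\}$, contradicting $i \not\sim_{\cM_\cV(\cD)} j$ (if the two parents coincide the bypass is even simpler, a detail to note in the write-up). What your route buys is that it sidesteps the delicate step in the paper's argument about collider status on a concatenation that need not be a simple path and whose other vertices must be checked to remain open; what it costs is the ancestral-set machinery and the case split, whereas the paper's direct d-separation argument handles both cases uniformly and more briefly. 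As an aside, your ``sink'' case is exactly the situation $k \in \maximal(\cD,\cV)$ studied later in the paper, so the contradiction you derive there is consistent with Proposition~\ref{prop:descendants-no-fill}.
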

\begin{proof}
Let $i,j \in \cV \setminus k$ such that $i - j \in \cM_{\cV \setminus k}(\cD) \setminus \cM_{\cV}(\cD)[\cV \setminus k ]$, i.e. $i - j$ is a fill edge. Then there exists some d-connecting path $\gamma$ from $i$ to $j$ given $\cV \setminus\{i,j,k\}$ that is not d-connecting given $\cV \setminus\{i,j\}$. It follows that (1) $k$ is the only non-collider in $\gamma$ which is also in $\cV \setminus \{i,j\}$, and (2) each collider in $\gamma$ has a descendent in $\cV \setminus \{i,j\}$. Thus, the sub paths $\gamma(i,k)$ and $\gamma(k,j)$ are d-connecting given $\cV \setminus \{i,k\}$  and $\cV \setminus \{k,j\}$ respectively. In other words, $i - k \in \cM_{\cV}(\cD)$,  $j - k \in \cM_{\cV}(\cD)$. By assumption $i \not \sim_{\cM_{\cV}(\cD)} j$, so we conclude that $i - j \in \cM_{\cV \setminus k}(\cD) \setminus \cM_{\cV}(\cD)[\cV \setminus k ] \implies i,j \in \{ i, j \in \nbr_{\cM_\cV(\cD)}(k) \mid i \not\sim_{\cM_\cV(\cD)} j \}$.

Conversely let $i, j \in \cV \setminus \{k\}$ such that $i,j \in \nbr_{\cM_\cV(\cD)}(k) $ and $i \not\sim_{\cM_\cV(\cD)} j$, then there exist two d-connecting paths $\gamma_1$ from $i$ to $k$ given $\cV \setminus \{i,k\}$, and $\gamma_2$ from $j$ to $k$ given $\cV \setminus \{j,k\}$. Since $i \not\sim_{\cM_\cV(\cD)} j$ there is no d-connecting path from $i$ to $j$ given $\cV \setminus \{i,j\}$, so $k$ must be a non-collider in the concatenated path $\gamma_1 \circ \gamma_2$ given $\cV \setminus \{i,j\}$. But if we restrict the conditioning set to $\cV \setminus \{i,j,k\}$, then $\gamma_1 \circ \gamma_2$ becomes d-connecting and as a result $i - j \in \cM_{\cV \setminus k}(\cD)$. Since we assumed $i$ and $j$ are not d-connected given $\cV \setminus \{i,j\}$, it follows that if $i, j \in \nbr_{\cM_\cV(\cD)}(k)$ and $i \not\sim_{\cM_\cV(\cD)} j \}$, then $i - j \in \cM_{\cV \setminus k}(\cD) \setminus \cM_{\cV}(\cD)[\cV \setminus k ]$.
\end{proof}

\textbf{Maximal nodes.} To discover a permutation with a sparse minimal IMAP, we will build the permutation from the last vertex to the first. At each step, we would like to pick a vertex which has no descendants remaining. Formally, if $\cV$ is the set of vertices left unpicked, we seek $k \in \maximal(\cD, \cV)$, where
\[
\maximal(\cD, \cV) := \{ v \in \cV \mid \descendants_\cD(v) \cap \cV = \emptyset \}
\]
is the set of \emph{maximal nodes} of $\cV$ with respect to $\cD$. The next two propositions establish that the removal score and fill score are helpful indicators of whether or not a vertex is maximal.
\vspace{0.2cm}

\begin{prop}\label{prop:removal-score-zero}
$R_\cD(\cV, k) > 0 \Rightarrow k \in \maximal(\cD, \cV)$.
\end{prop}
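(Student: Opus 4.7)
The plan is to prove the contrapositive: if $k \notin \maximal(\cD, \cV)$, so that $k$ has a strict descendant $d \in \cV$, then $\cR_\cD(\cV, k) = \emptyset$, and hence $R_\cD(\cV, k) = 0$.

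The key tool I would invoke is the classical moral-ancestral characterization of d-separation \citep{lauritzen1996graphical}: for any $u, w \in [p]$ and any $S \subseteq [p] \setminus \{u, w\}$, $u \ci_\cD w \mid S$ if and only if $u$ and $w$ are separated by $S$ in the undirected moral graph $\cM(\cD[\an(\{u, w\} \cup S)])$ of the ancestral subgraph on $\an(\{u, w\} \cup S)$.

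First, I would observe that since $d \neq k$ and $d \in \cV$, we have $d \in \cV \setminus k$, and since $k$ is a strict ancestor of $d$, we have $k \in \an(\{d\}) \subseteq \an(\cV \setminus k)$. Combined with the obvious inclusion $\an(\cV \setminus k) \subseteq \an(\cV)$, this yields $\an(\cV) = \an(\cV \setminus k)$. Consequently, for any $i, j \in \cV \setminus k$ the ancestral closures $\an(\{i,j\} \cup (\cV \setminus \{i,j\}))$ and $\an(\{i,j\} \cup (\cV \setminus \{i,j,k\}))$ both equal $\an(\cV)$, so both of the relevant d-separation queries reduce to undirected separation in the same moral-ancestral graph $H := \cM(\cD[\an(\cV)])$. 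Then I would appeal to the elementary fact that in an undirected graph enlarging the separating set can only separate more pairs; since $\cV \setminus \{i,j,k\} \subsetneq \cV \setminus \{i,j\}$, separation of $i$ from $j$ in $H$ by the smaller set implies separation by the larger set. Translating back via the Lauritzen characterization gives $i \ci_\cD j \mid \cV \setminus \{i,j,k\} \Rightarrow i \ci_\cD j \mid \cV \setminus \{i,j\}$, whose contrapositive is exactly the containment $\cM_\cV(\cD)[\cV \setminus k] \subseteq \cM_{\cV \setminus k}(\cD)$, i.e., $\cR_\cD(\cV, k) = \emptyset$.

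The main delicate point I anticipate is the case where the witness descendant $d$ happens to coincide with one of the endpoints $i$ or $j$ of a potentially removed edge. A more direct path-based attempt (taking a d-connecting path for $\cV \setminus \{i,j\}$ and rerouting it through $d$ after deleting $k$ from the conditioning set) would have to split into subcases here, since $d \in \{i, j\}$ is never in any conditioning set and so cannot directly reopen a collider blocked by the removal of $k$. The moral-ancestral reduction bypasses this issue, because the crucial equality $\an(\cV) = \an(\cV \setminus k)$ only needs $d \in \cV \setminus k$, which holds automatically whenever $d \neq k$, independently of the choice of $i$ and $j$.
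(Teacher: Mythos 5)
Your proof is correct, and it takes a genuinely different route from the paper. The paper argues directly on d-connecting paths: taking a witness descendant $k'\in\cV$ of $k$, it shows that any path that d-connects $i$ and $j$ given $\cV\setminus\{i,j\}$ remains d-connecting given $\cV\setminus\{i,j,k\}$, because $k$ cannot lie on such a path as a non-collider, and any collider that needed $k$ in the conditioning set still has the descendant $k'$ there; hence $\cM_\cV(\cD)[\cV\setminus k]\subseteq\cM_{\cV\setminus k}(\cD)$ and $R_\cD(\cV,k)=0$. You instead reduce both d-separation queries to undirected separation in one and the same moralized ancestral graph $\cM(\cD[\an(\cV)])$, using the equality $\an(\cV)=\an(\cV\setminus k)$ (which is exactly where the existence of a descendant of $k$ in $\cV\setminus k$ enters), and then invoke monotonicity of undirected separation under enlarging the separating set. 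What each buys: your reduction cleanly sidesteps the endpoint subcase you flag---the paper's argument, as written, implicitly assumes the witness $k'$ lies in the conditioning set $\cV\setminus\{i,j,k\}$, which fails if $k'\in\{i,j\}$ and would require an extra rerouting step to patch---whereas your argument needs only $d\in\cV\setminus k$ and the fact that ancestral closures are closed under ancestors. The paper's proof, on the other hand, stays entirely within the path-based d-separation calculus used in its other propositions and does not rely on the moral-ancestral characterization as an external lemma. Both establish the desired containment $\cM_\cV(\cD)[\cV\setminus k]\subseteq\cM_{\cV\setminus k}(\cD)$, so your argument is a valid (and arguably tighter) alternative.
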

\vspace{-0.3cm}
\begin{proof}
We prove the contrapositive. Let $k' \in \cV$ be a descendant of $k$ and let $i,j \in \cV \setminus \{k\}$ such that $ i-j \in \cM_\cV(\cD) \left[\cV \setminus \{k\}\right]$, i.e, there is a d-connecting path $\gamma$ from $i$ to $j$ given $\cV \setminus \{i,j\}$ in $\cD$. Assume $k$ is a descendent of a collider in $\gamma$, then $k'$ is as well because $k'$ is a descendant of $k$. As a result, the path will remain d-connecting if we restrict the conditioning set to $\cV \setminus \{i,j,k\}$ because  $k' \in \cV \setminus \{i,j,k\}$. Thus, we conclude that $i - j \in  \cM_{\cV \setminus k}(\cD)$. Note that if $k$ is not a descendent of a collider in $\gamma$, then it cannot be on the path; otherwise, $\gamma$ would not be a d-connecting given $\cV \setminus \{i,j\}$. Thus, in this case $i - j \in  \cM_{\cV \setminus k}(\cD)$ still holds. These results show that $\cM_\cV(\cD)[\cV \setminus k] \subset \cM_{\cV \setminus k}(\cD)$ and it follows that $R_\cD(\cV,k) = 0$. 
\end{proof}

Proposition~\ref{prop:removal-score-zero} gives us a way to \textit{certify} that a node $k$ has no descendants remaining, and thus adding it to the end of the node ordering will be topologically consistent. 
However, the converse is not true: a node may have no descendants remaining, but still have removal score zero. This happens for example if the parents of $k$ form a clique in $\cD$, as in Figure~\ref{fig:mug6}. Thus, a certificate is not \textit{always} available. 
If every node has zero removal score, then we cannot use Proposition~\ref{prop:removal-score-zero} to find maximal nodes. Instead, we resort to proving that some nodes are \textit{not} maximal, which helps prune the search space and increase the likelihood that we pick a maximal node.

\vspace{0.2cm}
\begin{prop}\label{prop:descendants-no-fill}
$F_\cD(\cV, k) > 0 \Rightarrow k \not\in \maximal(\cD, \cV)$.
\end{prop}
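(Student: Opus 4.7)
The plan is to unpack the definition of $F_\cD(\cV, k) > 0$ to exhibit an explicit d-connecting path, and then trace a directed substructure of that path starting at $k$ to locate a descendant of $k$ inside $\cV$. Concretely, $F_\cD(\cV, k) > 0$ yields nodes $i, j \in \cV \setminus k$ together with a path $\gamma$ from $i$ to $j$ in $\cD$ which is d-connecting given $\cV \setminus \{i, j, k\}$, while no path from $i$ to $j$ is d-connecting given $\cV \setminus \{i, j\}$. In particular, $\gamma$ itself becomes blocked once $k$ is added to the conditioning set.

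The first key step is to argue that $k$ must lie on $\gamma$ as a non-collider. This follows from the standard d-separation calculus: adding a single node to a conditioning set can only change a path's status by blocking at that node (when it is a non-collider on the path) or by opening at a collider (when the new node is that collider or one of its descendants). The latter can only make more paths d-connecting, so the only way $\gamma$ can be blocked by adding $k$ is if $k$ is a non-collider on $\gamma$. Hence at least one of $k$'s two neighbors on $\gamma$, call it $v_1$, satisfies $k \to v_1$.

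The second key step is to walk $\gamma$ from $k$ in the direction of $v_1$, tracking a directed sub-path $k \to v_1 \to v_2 \to \cdots$ emanating from $k$. By induction on the number of steps, each visited node is a descendant of $k$. The walk must eventually either reach $j$, in which case $j \in \descendants_\cD(k) \cap \cV$, or else hit a node $v_m$ where the next edge on $\gamma$ reverses, $v_m \leftarrow v_{m+1}$. In the latter case, $v_m$ is a collider on $\gamma$, so by the d-connecting property of $\gamma$ it has a descendant in $\cV \setminus \{i, j, k\}$; since $v_m$ is itself a descendant of $k$, that node is too. Either way $\descendants_\cD(k) \cap \cV \neq \emptyset$, and so $k \notin \maximal(\cD, \cV)$.

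The main subtlety is in the inductive step of the walk: one must rule out termination at an intermediate non-collider of $\gamma$ that happens to live in $\cV$. This is exactly prevented by the d-connecting property of $\gamma$ given $\cV \setminus \{i, j, k\}$, which forbids non-colliders of $\gamma$ from lying in this conditioning set; combined with the fact that $\gamma$ does not repeat vertices, the walk is forced to terminate at $j$ or at a collider, as claimed.
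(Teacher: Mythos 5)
Your proof is correct, but it runs in the opposite direction from the paper's. The paper proves the contrapositive: assuming $k$ is maximal, it takes any two neighbors $i,j$ of $k$ in $\cM_\cV(\cD)$, concatenates their d-connecting paths to $k$, and uses maximality to force $k$ to be a \emph{collider} on the concatenation, so that $i$ and $j$ are already adjacent and, by the characterization in Proposition~\ref{prop:fill-characterization}, no fill can occur. You instead argue directly: a fill edge $i-j$ supplies a path $\gamma$ that is d-connecting given $\cV\setminus\{i,j,k\}$ but blocked given $\cV\setminus\{i,j\}$; since enlarging the conditioning set can only close a path at a non-collider, $k$ must be a \emph{non-collider} on $\gamma$, and following the outgoing edge $k\to v_1$ along $\gamma$ until the path ends or the orientation reverses produces either an endpoint (which lies in $\cV$) or a collider (which, by d-connection, is in or has a descendant in $\cV\setminus\{i,j,k\}$) — in both cases a descendant of $k$ in $\cV$. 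Your route has the advantage of being self-contained (it uses only the definitions of fill edge and moral subgraph, not Proposition~\ref{prop:fill-characterization}) and it sidesteps the concatenation step in the paper's proof, where $\gamma_1\circ\gamma_2$ may revisit vertices — a point the paper glosses over; the paper's route, in turn, reuses the already-established fill characterization and mirrors the structure of the proof of Proposition~\ref{prop:removal-score-zero}. Two cosmetic remarks: the directed walk may terminate at $i$ rather than $j$ (harmless, since both endpoints lie in $\cV$), and the collider you reach may itself belong to the conditioning set rather than having a strict descendant in it (also harmless, since it is then itself a descendant of $k$ in $\cV$); also, your closing "subtlety" about non-colliders lying in $\cV$ is not actually needed — termination at an endpoint or at a collider follows from finiteness of $\gamma$ alone.
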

\vspace{-0.3cm}
\begin{proof}
We prove the contrapositive. Let $i,j \in V \setminus \{k\}$ such that $i, j \in \nbr_{\cM_\cV(\cD)}(k)$, then there exist two d-connecting paths: $\gamma_1$ from $i$ to $k$  given $\cV \setminus \{i,k\}$,  and $\gamma_2$ from $j$ to $k$ given $\cV \setminus \{j,k\}$. We wish to show that $\gamma := \gamma_1 \circ \gamma_2$ is a d-connecting path from $i$ to $j$ given $\cV \setminus \{i,j\}$, which by Proposition~\ref{prop:fill-characterization} will imply that $F_\cD(\cV, k) = 0$. To this end, we will prove that $k$ must be a collider on $\gamma$. Suppose otherwise, and without loss of generality assume $k$ is of the form $k \rightarrow \dots i$, then we claim that there exists a collider on the path $\gamma_1$. Otherwise, $i$ would be a descendant of $k$ in $\cV$. Thus, let $\ell$ be the collider closest to $k$ in $\gamma_1$ and let $d$ be a descendent of $\ell$ in $\cV$. We know that $d$ exists because if it did not then $\gamma_1$ would not a d-connecting path given $\cV \setminus \{i,k\}$. Note that $d$ is also a descendent of $k$, but since $d \in \cV$ this contradicts the hypothesis. Thus we conclude that $k$ must be a collider, which implies $\gamma$ a d-connecting path from $i$ to $j$ given $\cV \setminus \{i,j\}$.
\end{proof}

However, again, the converse is not true. A non-maximal node $k$ may have zero fill score, for instance if the descendants of $\cV$ form a clique in $\cD$.




\begin{algorithm}[t]
	\caption{\RFD}
	\label{alg:rfd-alg}
	\begin{algorithmic}
		\State \textbf{Input:} Distribution $\bbP$, depth $w$
		\State \textbf{Output:} Permutation $\pi$ 
		\State Estimate $\cM(\bbP)$ via any undirected structure-learning algorithm
		\State Let $\cV_0 = \cV$
		\State Let $\pi = []$
		\While {$|\cV_t| > 0$}
		    \State Pick $\pi' = \RFDStep(\bbP, \cV_t, w)$
		    \State Let $\pi = \pi, \pi'$
		    \State $\cV_t = \cV_{t-1} \setminus \pi'$
		\EndWhile
		\State \textbf{return} $\pi$
	\end{algorithmic}
\end{algorithm}

\begin{algorithm}[t]
	\caption{\RFDStep}
	\label{alg:rfd-step}
	\begin{algorithmic}[1]
		\State \textbf{Input:} Distribution $\bbP$, $\cV_t$, depth $w$
		\State \textbf{Output:} Permutation $\pi$ 
		\State $\paths = [()]$
		\State $r^* = 0$
		\While {$w > 0$ and $r^* = 0$}
		\State $\newpaths = []$
		\For {$\Path \in \paths$}
		\State Let $r_k = R(\cV_t - \Path, k)$, $k \in \cV_t - \Path$
		\If {$\max_k r_k > 0$} 
		    \For {$k \in \arg\max r_k$} \label{line:pick-max-remove}
		        \State $\append~\langle \Path, k \rangle$ to \newpaths
		    \EndFor
		\Else \label{line:else-zero-fill}
    		\State Let $f_k = F(\cV_t - \Path, k)$, $k \in \cV_t - \Path$
    		\For {$k \in \arg\min f_k$} \label{line:pick-min-fill}
		        \State $\append~\langle \Path, k \rangle$ to \newpaths
    		\EndFor
		\EndIf
		\EndFor
		\State $w \gets w-1$
		\State $\paths \gets \newpaths$
		\State $r^* := \max_\Path r_\Path$
	    \EndWhile
	    
	    \State \textbf{return} $\Path \in \arg\min_{\{ \Path \mid r_\Path = r^* \}} d_\Path$ \label{line:return-path}

	\end{algorithmic}
\end{algorithm}

\section{Method}\label{sec:methods}

The above theoretical results suggest using a combination of the removal and fill scores to discover the ordering of the nodes in the graph. In this section, we develop a method based on the removal score, fill score, and the degree or a node; we call this method the \textit{Removal-Fill-Degree} (\RFD) algorithm.

\rref{alg:rfd-alg} begins by estimating an undirected graph over all of the variables. In the multivariate Gaussian case, this can easily be done by thresholding the partial correlation matrix, but this estimator can only be computed if there are more samples than variables ($n > p$) and it has poor  performance if $n\approx p$. Fortunately, there is a large literature on undirected graph estimation in the sparse high-dimensional setting. For example, the CLIME estimator \citep{cai2011constrained}, which estimates a precision matrix as a minimum $\ell$-1 norm estimate under the constraint that its inverse is entry-wise close to the sample covariance matrix, converges in spectral norm to the true precision matrix with rate $s \sqrt{\log p/n}$, where $s$ is the number of nonzero entries in the true precision matrix.

\begin{figure*}
    \includegraphics[width=.9\textwidth]{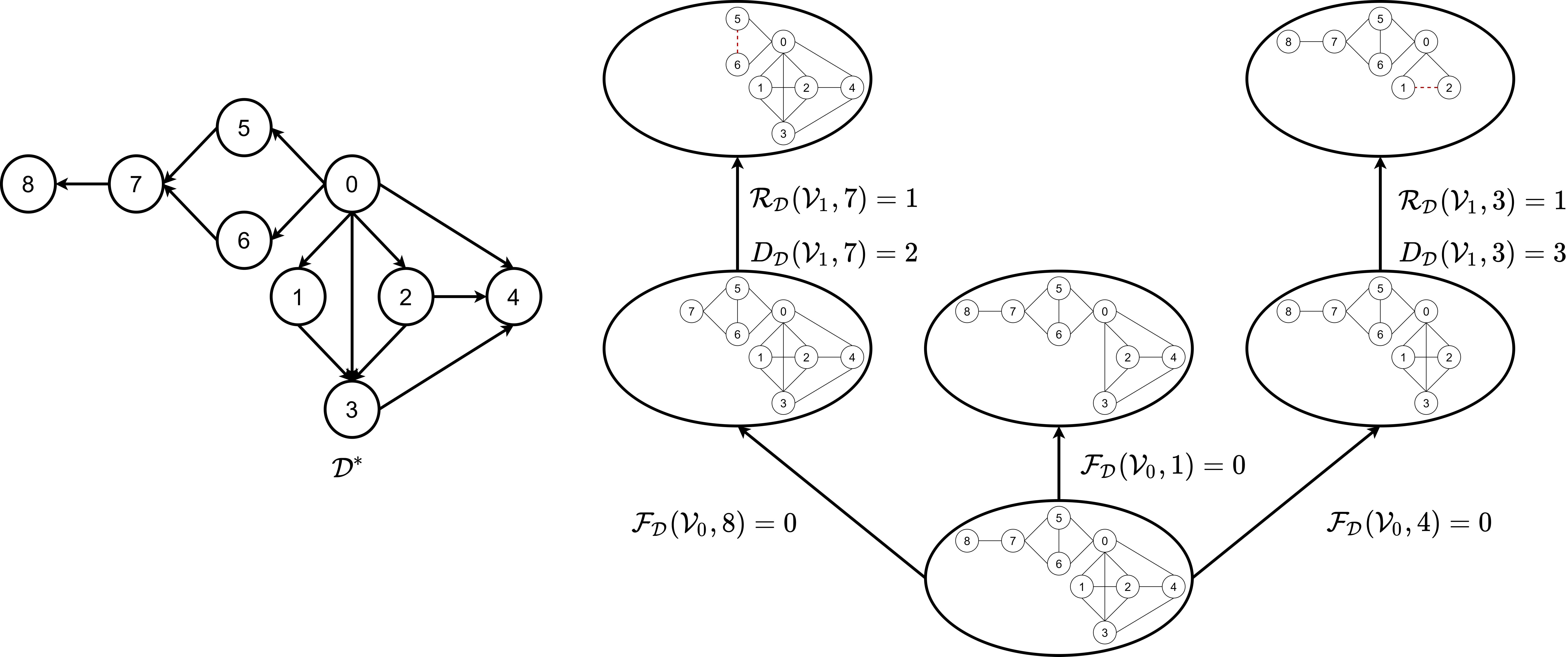}
    \caption{The example call to \RFDStep~described in \rref{ex:rfd-step}. $D_\cD(\cV, k)$ denotes the degree of $k$ in $\cM_\cD(\cV)$.}
    \label{fig:alg-example}
    \vspace{-0.2cm}
\end{figure*}

\textbf{Description of \RFD.} The main principle behind our algorithm is to search for maximal nodes. 
At each step, we perform a breadth-first search of depth $w$ to greedily pick the ``best" set of up to $w$ nodes to add to the end of the order. 
For each $\Path$ in our depth-first search, we define the quantities $r_\Path$ and $d_\Path$ to be the removal and degree scores of the most recently added node to the path. 
As suggested by Proposition~\ref{prop:descendants-no-fill}, a nonzero removal score for a node indicates that it is maximal. 
Thus, if any node has nonzero removal score, we pick amongst the nodes with maximum removal score (\rref{line:pick-max-remove}) and append them to our search path. 
If all nodes have zero removal score, then Proposition~\ref{prop:descendants-no-fill} suggests a way to prune some search directions, since any node with positive fill score is \textit{not} a candidate sink. 
In the noiseless case, we would only need to limit our search to nodes with zero fill score.
However, on real data, we may find that all nodes have positive fill score due to noise, so we pick amongst the nodes with \textit{minimum} fill score (\rref{line:pick-min-fill}) and append them to our search path. 
If any path ends in a node with nonzero removal score ($r^* > 0$), then we exit the search and return one of the paths with maximum removal score, with tie-breaking giving preference to nodes with smaller degree, (\rref{line:return-path}), since this prefers sparser graphs.



%

The following example demonstrates \RFD~ with $w = 2$. 
\begin{example}\label{ex:rfd-step}
Let the true DAG be $\cD^*$ as shown in Figure~\ref{fig:alg-example}. On the first level of the breadth-first search, no node has positive removal score, and the nodes with minimum fill score (in this case, 0 since there is no noise) are 1, 4, and 8 (we take the \texttt{else} branch on \rref{line:else-zero-fill}). On the second level of the breadth-first search, we find that removing 7 after 8, or removing 1 after 4, both result in a removal score of 1, so we add to both paths (\rref{line:pick-max-remove}). Finally, tie-breaking between these two paths is done by picking the path with smaller degree (\rref{line:return-path}), so we add 7,8  to the end of the permutation. 
\end{example}

\subsection{Runtime}

We now characterize the runtime of \RFD, run with depth $w$.
\vspace{0.2cm}

\begin{prop}\label{prop:general-runtime}
    Suppose that updating the undirected graph after marginalization takes $f(n, p)$ time, for $n$ nodes and $p$ samples. Then \RFDStep~with depth $w$ takes $O(p^w (p^2 + f(n, p)))$ time, and \RFD~with depth $w$ takes $O(p^{w+1} (p^2 + f(n, p)))$ time.
\end{prop}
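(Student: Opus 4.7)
The plan is to first bound the total work of a single call to \texttt{RFDStep}, then multiply by the number of outer iterations in \texttt{RFD}.

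For \texttt{RFDStep} with depth $w$, I would first bound the number of paths in the breadth-first search. At level $\ell$ of the search, each path from level $\ell-1$ is extended by (at most) every remaining node, so the number of paths at level $\ell$ is at most $p^\ell$, and the total number of paths summed over all $w$ levels is $\sum_{\ell=0}^{w-1} p^\ell = O(p^{w-1})$ path-extensions to consider. For each such path at level $\ell-1$, the algorithm examines every candidate extension $k$ (there are at most $p$), and for each candidate it must (i) update the undirected graph to reflect marginalizing the nodes on the current path plus $k$, which by assumption costs $O(f(n,p))$, and (ii) compute the removal and fill scores $R(\cV_t - \Path, k)$ and $F(\cV_t - \Path, k)$, which amounts to comparing the edge sets of $\cM_{\cV_t - \Path}(\cD)[\cV_t - \Path \setminus k]$ and $\cM_{\cV_t - \Path \setminus k}(\cD)$, an $O(p^2)$ operation. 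So the per-path work is $O(p (p^2 + f(n,p)))$, and summing over the $O(p^{w-1})$ paths gives a total of $O(p^w(p^2 + f(n,p)))$ for \texttt{RFDStep}.

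For \texttt{RFD}, note that each call to \texttt{RFDStep} returns a path of length between $1$ and $w$, so at least one vertex is appended to $\pi$ per outer iteration. Hence there are at most $p$ iterations of the outer \texttt{while} loop, and multiplying gives the claimed $O(p^{w+1}(p^2 + f(n,p)))$ bound. The one-time cost of estimating $\cM(\bbP)$ at the start of \texttt{RFD} is subsumed into either the $p^{w+1} \cdot p^2$ term (for typical polynomial undirected-structure-learning algorithms) or can be absorbed into $f(n,p)$, so it does not change the asymptotic bound.

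The main subtlety is being careful about (a) exactly what counts as ``updating the undirected graph after marginalization'' — here I am implicitly assuming that the update corresponds to marginalizing one additional node at a time, so each path extension triggers exactly one such update — and (b) ensuring that the $O(p^2)$ edge-set comparison dominates (or at least does not inflate) the per-step bookkeeping beyond what is stated. Everything else (tie-breaking by degree on \rref{line:return-path}, the short-circuit when $r^{*} > 0$) can only decrease the runtime relative to the worst-case bound analyzed above, so these extra operations do not affect the asymptotic complexity.
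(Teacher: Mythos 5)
Your proposal is correct and follows essentially the same argument as the paper: bound the number of candidate evaluations in the breadth-first search of \texttt{RFDStep} by $O(p^w)$, charge each evaluation $O(p^2 + f(n,p))$ for the graph update and the removal/fill score computation, and multiply by the at most $p$ outer iterations of \texttt{RFD}. Your path-by-level accounting and the remark about absorbing the initial moral-graph estimation are just slightly more explicit versions of the paper's terse proof, not a different route.
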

\vspace{-0.3cm}
\begin{proof}
At each step of the breadth-first search in \RFDStep, we need to calculate the removal and fill scores for up to $p^w$ nodes. Given the undirected graph, both of these quantities take at most $O(p^2)$ time to compute. Thus, each \RFDStep~takes $O(p^w (p^2 + f(n,p)))$ time. \RFD~calls \RFDStep~at most $p$ times.
\end{proof}

In the multivariate Gaussian case, when using the partial correlation thresholding estimator, we have $f(n, p) = O(p^2)$, as described in \rref{app:update-gaussian}. This gives us the following corollary:
\begin{cor}\label{cor:gaussian-runtime}
In the multivariate Gaussian setting, \RFDStep~with depth $w$ takes $O(p^{w+2})$ time, and \RFD~with depth $w$ takes $O(p^{w+3})$ time. In particular, \RFD~with depth 1 takes $O(p^4)$ time.
\end{cor}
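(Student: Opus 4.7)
The plan is to invoke Proposition~\ref{prop:general-runtime} directly, so the only thing to establish is that the update cost $f(n,p)$ equals $O(p^2)$ for the Gaussian setting with the partial-correlation-thresholding estimator. The key observation is the recursive formula stated earlier in the paper,
\[
(\Theta_{\cV \setminus k})_{ij} = (\Theta_\cV)_{ij} - \frac{1}{\Theta_{kk}} (\Theta_\cV)_{ik} (\Theta_\cV)_{jk},
\]
which allows us to obtain the precision matrix on $\cV \setminus k$ from the precision matrix on $\cV$ by a rank-one update. Since this touches each of the $O(|\cV|^2) = O(p^2)$ entries exactly once and performs constant-time arithmetic per entry, the marginal precision matrix can be updated in $O(p^2)$ time.

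Next, I would argue that extracting the moral subgraph $\cM_{\cV \setminus k}(\cD)$ from this updated precision matrix is also $O(p^2)$: as noted in the derivation preceding Proposition~\ref{prop:fill-characterization}, in a Gaussian distribution an edge $i - j$ lies in the moral subgraph if and only if $(\Theta_{\cV \setminus k})_{ij} \ne 0$ (or, in the noisy setting, exceeds a threshold), so a single pass over the updated matrix suffices. Combined with the previous step, this gives $f(n,p) = O(p^2)$, with full details deferred to Appendix~\ref{app:update-gaussian}.

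Finally, plug $f(n,p) = O(p^2)$ into Proposition~\ref{prop:general-runtime}: \RFDStep{} runs in $O(p^w (p^2 + p^2)) = O(p^{w+2})$ time, and \RFD{} runs in $O(p^{w+1}(p^2+p^2)) = O(p^{w+3})$ time. Specializing to $w=1$ yields the stated $O(p^4)$ bound. There is no real obstacle here beyond verifying that the rank-one precision update can indeed be carried out in place in $O(p^2)$; the rest is bookkeeping.
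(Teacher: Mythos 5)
Your proposal is correct and follows the paper's own route exactly: it plugs $f(n,p) = O(p^2)$ into Proposition~\ref{prop:general-runtime}, justifying the $O(p^2)$ update cost via the rank-one marginal-precision update and the entry-wise partial-correlation thresholding, just as in Appendix~\ref{app:update-gaussian}. No gaps; the bookkeeping to $O(p^{w+2})$, $O(p^{w+3})$, and the $w=1$ case is the same as in the paper.
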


In comparison, most \textit{provably consistent} causal structure learning algorithms require bounds on certain graph parameters, such as maximum indegree, in order to achieve a polynomial run time. For instance, the prominent PC algorithm must perform $O(p^{k+2})$ conditional independence tests, where $k$ is the maximum indegree of the true DAG. Similarly, recent versions of GES \citep{chickering2015selective} require $O(p^{k+2})$ calls to a scoring function. In the case of GSP~\cite{solus2017consistency}, the complexity at each step depends on the size of the Markov equivalence class, rather than the usual scaling based on maximum indegree.

\textbf{Performance on Dense Graphs.}
Since \RFD~is able to run in polynomial time without an explicit sparsity assumption on the underlying graph, and \RFD~is an approximate algorithm, a natural question arises: ``does \RFD~always perform poorly when the underlying graph is not sparse?" 
We show that the answer to this question is ``no": there exist dense graphs on which our method performs well; i.e., the combination of speed and performance achieved by our method does not rely on sparsity of the underlying graph.

Let $B_K$ denote a graph on $K + {K \choose 2}$ nodes, with edges generated as follows. For each $j \in K+1, \ldots, K + {K \choose 2}$, pick $P_j \subseteq [K]$, with $|P_j| = 2$, such that $P_j \neq P_{j'}$ for any $j \neq j'$. Let $p \to j$ for $p \in P_j$, and let there be a complete graph on $K_1, \ldots, K + {K \choose 2}$, with topological order given by numerical order. \rref{fig:dense-graph} shows an example of this construction for $K = 4$.

\begin{figure}
    \centering
    \includegraphics[width=.7\linewidth]{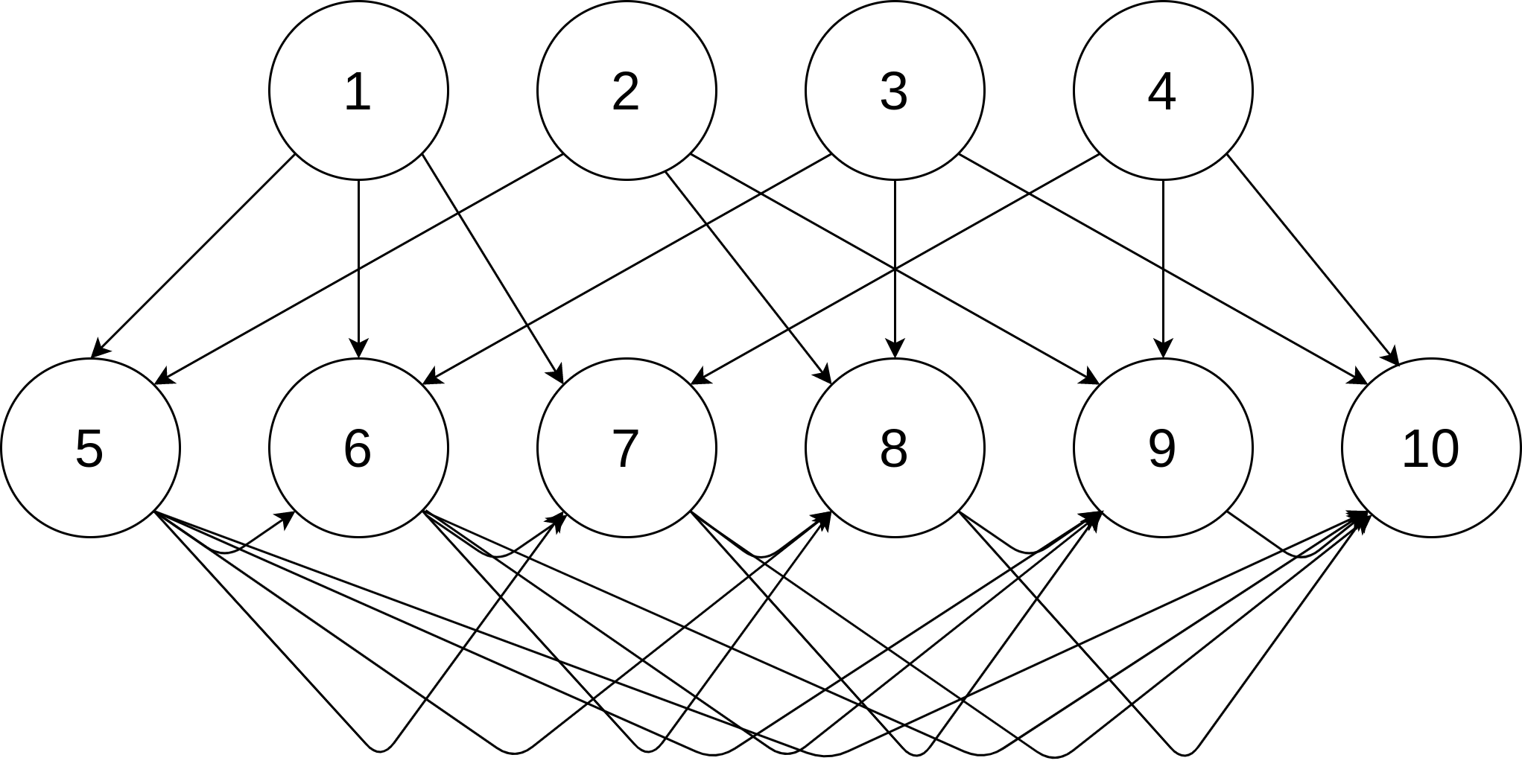}
    \vspace{-0.2cm}
    \caption{$B_4$}
    \label{fig:dense-graph}
    \vspace{-0.2cm}
\end{figure}

The number of missing edges in $B_K$ is less than ${K \choose 2} + K{K \choose 2} = O(K^3)$, whereas the number of \textit{possible} edges is ${{K \choose 2} + K \choose 2} = O(K^4)$, so that $B_K$ is dense.
Furthermore, the \RFD~algorithm perfectly recovers $B_K$, since the removal score of each $j \in K+1, \ldots, K + {K \choose 2}$ becomes exactly 1 only after all nodes after in the ordering are removed, and the ordering of the first $K$ nodes is arbitrary.



\begin{figure*}[t]
\vspace{-0.2cm}
    \begin{subfigure}{.33\textwidth}
    \centering
    \includegraphics[width=\textwidth]{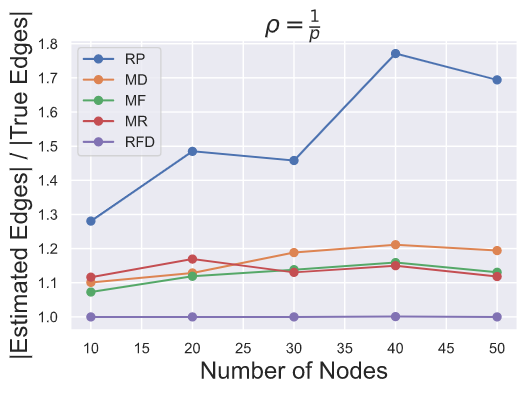}
    \label{fig:noiseless-1-over-p}
    \end{subfigure}
    ~
    \begin{subfigure}{.33\textwidth}
    \centering
    \includegraphics[width=\textwidth]{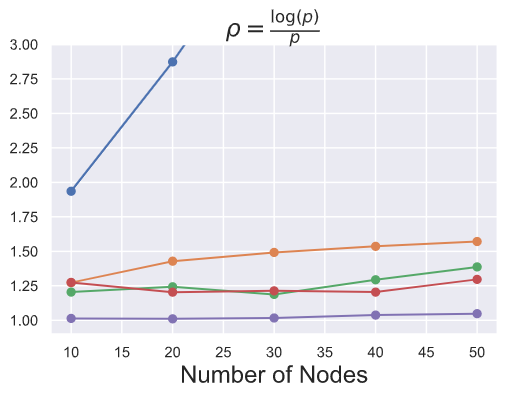}
    \label{fig:noiseless-1-over-p}
    \end{subfigure}
    ~
    \begin{subfigure}{.33\textwidth}
    \centering
    \includegraphics[width=\textwidth]{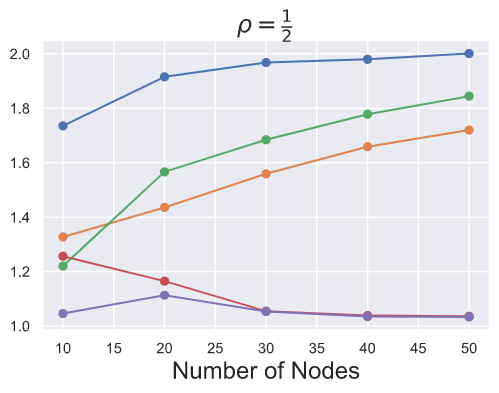}
    \label{fig:noiseless-1-over-p}
    \end{subfigure}
    \vspace{-0.6cm}
    \caption{
    Performance of permutation-finding algorithms in the noiseless setting, measured by the density of their induced minimal IMAPs relative to the true graph. Each point represents the average over 100 randomly generated DAGs.
    }\label{fig:noiseless-results}
    \vspace{-0.2cm}
\end{figure*}

\section{Empirical Results}\label{sec:empirical}

In this section, we generate DAGs according to an Erd{\"o}s-R{\'e}nyi skeleton with random order, varying the number of nodes $p$ and the edge density $\rho$, which may be a function of $p$. We pick edge weights independently from $\textrm{Unif}([-1, -.25] \cup [.25, 1])$ to ensure that they are bounded away from zero.

\subsection{Noiseless Setting}
\vspace{-0.2cm}

We first investigate the quality of the permutation found by our algorithm in the noiseless setting, i.e., when we are given the true precision matrix. 
Given the output permutation $\pi$ of \RFD, we may return the graph estimate $\cD_\pi$, i.e., the minimal IMAP discussed in \rref{sec:background}. The performance of \RFD~can be measured by the ratio of the number of edges in $\cD_\pi$ to the number of edges in the true graph, $\cD^*$. 
This ratio is always greater than or equal to 1, with equality if and only if $\cD_\pi$ is in the Markov equivalence class of $\cD^*$.

We compare to a number of baselines, including \textbf{random permutations (RP)} and the following greedy selection strategies, where $\cV_t$ is the set of unpicked nodes at step $t$ of the algorithm and $k_t$ is the node picked at step $t$:
\vspace{-0.2cm}
\begin{itemize}
    \item \textbf{Min-degree (MD)}: $k_t \in \arg\min_{k} |\nbr_{\cM_{\cV_t}(\cD)}(k)|$
    \vspace{-0.6cm}
    \item \textbf{Min-fill (MF)}: $k_t \in \arg\min_{k} F_\cD(\cV_t, k)$
    \vspace{-0.2cm}
    \item \textbf{Max-remove (MR)}: $k_t \in \arg\max_{k} R_\cD(\cV_t, k)$
\end{itemize}
\vspace{-0.2cm}

Figure~\ref{fig:noiseless-results} demonstrates that the \RFD~algorithm is often able to find a permutation which induces a minimal IMAP that is nearly as sparse as the true DAG. The \RFD~algorithm clearly outperforms all of the baselines on this task. 
It is notable that on dense graphs ($\rho = \frac{1}{2}$), with a large number of nodes ($p \geq 30$), the MR algorithm matches the performance of the \RFD~algorithm, indicating that the removal score is an especially valuable way to identify maximal nodes in such settings.
In contrast, on dense graphs, the performance of the MD and MF algorithms both \textit{degrade} as the number of nodes increases. 

\begin{figure*}[t!]
\vspace{-0.2cm}
    \centering
    \includegraphics[width=\textwidth]{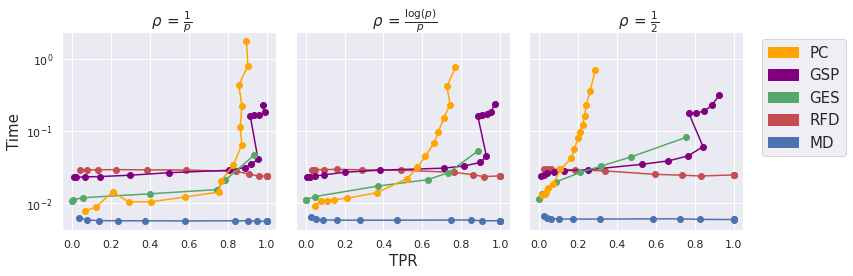}
    \vspace{-0.8cm}
    \caption{
    Computation time of various causal structure learning algorithms, with $p = 20$ nodes and $n = 40$ samples. Each point represents the average over 35 randomly generated DAGs.
    }\label{fig:noisy-results-time}
    \vspace{-0.4cm}
\end{figure*}

\begin{figure*}[t!]
    \begin{subfigure}{.33\textwidth}
    \centering
    \includegraphics[width=\textwidth]{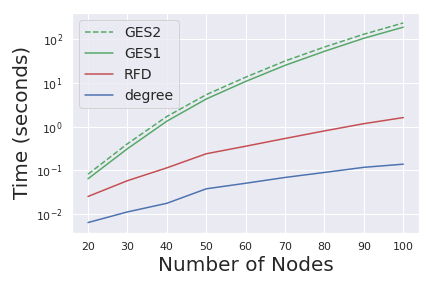}
    \label{fig:time-scaling}
    \end{subfigure}
    ~
    \begin{subfigure}{.33\textwidth}
    \centering
    \includegraphics[width=\textwidth]{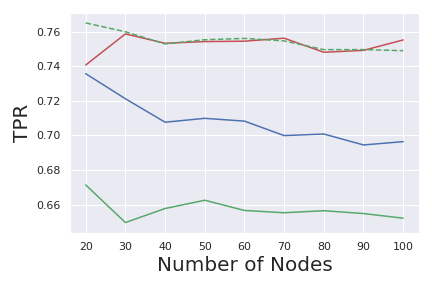}
    \label{fig:tpr-scaling}
    \end{subfigure}
    ~
    \begin{subfigure}{.33\textwidth}
    \centering
    \includegraphics[width=\textwidth]{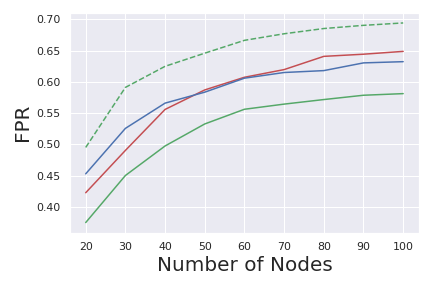}
    \label{fig:fpr-scaling}
    \end{subfigure}
    \vspace{-0.6cm}
    \caption{
    Performance and computation time of GES, \RFD, and MD as a function of the number of nodes. Given $p$ nodes, we take $n = 20p$ samples.
    }\label{fig:scaling}
    \vspace{-0.3cm}
\end{figure*}

\subsection{Noisy Setting}
\vspace{-0.2cm}

We now compare the performance of \RFD~to that of other algorithms on the task of causal structure learning from data.
We first investigate the computational scaling of our method along with several other prominent methods for causal structure learning across a range of densities. In \rref{app:roc-noisy}, we show that the ROC curves of all of the algorithms are similar across the range of densities. 
Since all algorithms perform similarly, we focus on how their computation time grows as a function of how dense of a graph they estimate, measured via the true positive rate.
As evidenced by Figure~\ref{fig:noisy-results-time}, the computation times of PC, GSP, and to a lesser extent GES and GSP all scale poorly as their sparsity parameters are tuned to yield denser graph.
Meanwhile, the computation times of \RFD~ and MD are almost constant across their ranges.
These results suggest that, especially in dense and/or high-dimensional regimes, the \RFD~algorithm can be used as a computationally efficient alternative to existing causal structure learning algorithms.

In \rref{fig:scaling}, we further compare \RFD~and MD to GES as a function of the number of nodes in the graph, in the dense setting of $\rho = \frac{1}{2}$. For all algorithms, we aimed to pick their parameter such that the true positive rate was approximately .7 on 20 node graphs, since this was uniformly a point on the ROC curve which offered a good compromise between true positive rate and false positive rate.
In the case of \RFD~and MD, we used a significance level of $\alpha = .001$ for all hypothesis tests. For GES, we picked two values of the regularization parameter $\lambda$ which ``sandwich" the TPR and FPR of our algorithms; $\lambda_1 = .5$ (in GES1) and $\lambda_2 = 1$ (in GES2). 
We find that while the performance of GES is comparable to that of \RFD, the required computation time scales much more dramatically. Moreover, \RFD~beats the slightly faster MD algorithm at almost every point, with a higher true positive rate and lower false positive rate.

\section{Discussion}\label{sec:discussion}

In this paper, we introduced a novel, efficient method for approximately recovering the node ordering of a causal DAG. 
Our method, the \RFD~algorithm, is motivated by the minimum-degree algorithm for sparse Cholesky decomposition, but leverages additional DAG-specific structure for improved performance. 
In particular, our method is based on the phenomenon of edge \textit{removal} after marginalization of a sink node.
Our method systematically combines signals about the causal ordering from a combination of edge removal, edge addition, and node degrees, and greatly outperforms methods which use only a single one of these signals on the task of permutation discovery. 
Moreover, running our method with fixed depth offers a polynomial-time alternative to provably consistent causal structure learning algorithms, which only run in polynomial time under assumptions on the underlying graph. 
We demonstrate that using our algorithm for causal structure learning performs comparably to existing causal structure learning algorithms, but with a significant speedup in run time.

Developing scalable causal structure learning algorithms is critical, since many of the domains in which causal structure learning is valuable involve thousands to millions of variables, for example in genomics \citep{bucur2019large,belyaeva2020causal} and neuroscience \citep{dubois2017causal}. Since the \RFD~algorithm is not specific to using the partial correlation thresholding estimator for the undirected graph, our algorithm can be used in non-Gaussian and high-dimensional settings. 
In the non-Gaussian case, our algorithm can be combined with nonparametric conditional independence tests such as HSIC \citep{gretton2008kernel} for estimating the undirected graph. 
In the high-dimensional case, there are a variety of estimators with guarantees for sparse graphs. 
One current advantage of the partial correlation thresholding estimator, in comparison to these estimators, is the ability to update the moral subgraph via a rank-one matrix addition in $O(p^2)$ time. 
To put other estimators into practice with our algorithm, especially on large graphs, it is necessary to develop efficient ways of updating estimates after marginalization, with provable guarantees that such updates do not introduce additional error.

\section*{Acknowledgments}
Chandler Squires was partially supported by an NSF Graduate Fellowship, MIT J-Clinic for Machine Learning and Health, and IBM. Caroline Uhler was partially supported by NSF (DMS-1651995), ONR (N00014-17-1-2147 and N00014-18-1-2765), and a Simons Investigator Award.

\bibliography{bib}

\clearpage
\appendix
\newcommand{\snum}{S}
\renewcommand{\theequation}{\snum.\arabic{equation}}

{\Large\textbf{Supplementary Material}}

\section{Efficiently updating the undirected graph for multivariate Gaussians}\label{app:update-gaussian}

We first describe the partial correlation thresholding estimator of the moral subgraph, showing that it takes $O(p^2)$ time given the sample precision matrix $\hatTheta^\cV$. Then, we show that after marginalizing a node $k$, the sample precision matrix $\hatTheta^{\cV \setminus k}$ takes $O(p^2)$ time to compute. Thus, by retaining the sample precision matrix over the current set of nodes at each iteration of the \RFD~algorithm, we may compute the new undirected graph in $O(p^2)$ time.

\subsection{Partial correlation thresholding estimator}

The \emph{partial correlation} between $X_i$ and $X_j$ given $X_S$, denoted $\rho_{i,j\mid S}$, is equal to the correlation of the residuals of $X_i$ and $X_j$ after performing linear regression on $X_S$. Supposing $X$ has a multivariate Gaussian distribution, recall that
$$
\rho_{ij\mid S} = 0 \Longleftrightarrow X_i \ci X_j \mid X_S.
$$
A classical result states that if $\rho_{i,j\mid S} = 0$, and $\hatrho_{ij\mid S}$ is the sample partial correlation computed from $n$ samples, then the quantity
\begin{equation}
    \hatz_{ij\mid S} = \sqrt{n - |S| - 3} \left|\frac{1}{2} \log \left( \frac{1 + \hatrho_{ij\mid S}}{1 - \hatrho_{ij\mid S}} \right)\right| \label{eq:z-transform}
\end{equation}
is distributed as a standard normal; i.e., to test the null hypothesis $H_0: \rho_{ij\mid S} = 0$ at significance level $\alpha$, we can reject if $|\hat{z}_{ij \mid S}| \geq \Phi^\inv(1 - \alpha/2)$.

Let $\hatTheta^\cV$ denote the marginal sample precision matrix over $X_\cV$, i.e., $\hatTheta_\cV = \hatSigma_{\cV,\cV}^\inv$, where $\hatSigma$ is the sample covariance matrix. Then the matrix of sample partial correlations $\hatK^\cV= [\hatrho_{ij \mid \cV \setminus \{ i, j \}}]_{ij}$ can be efficiently computed from $\hatTheta^\cV$ via the following formula:
$$
\hatK^\cV_{ij} = -\frac{\hatTheta^\cV_{ij}}{\sqrt{\hatTheta^\cV_{ii} \hatTheta^\cV_{jj}}}.
$$

Applying \rref{eq:z-transform} element-wise to $\hatK^\cV$ and thresholding gives an estimate of the moral subgraph $\cM_\cV(\cD)$.

We perform $O(1)$ operations on each element of $\hatTheta_\cV$, so computing the moral subgraph given $\hatTheta_\cV$ is $O(p^2)$.

\subsection{Updating the Marginal Precision Matrix}

If we consider the effect of marginalizing out $k$, the new marginal sample precision matrix is related to $\hatTheta^S$ by the following rank-one update:
$$
\hatTheta^{\cV \setminus k} = \hatTheta^\cV_{\cV \setminus k, \cV \setminus k} - (\hatTheta^\cV_{kk})^\inv \hatTheta^\cV_{\cV \setminus k, k} \hatTheta^S_{k, \cV \setminus k}.
$$
Thus, given access to $\hatTheta^\cV \in \bbR^{p \times p}$, we may compute $\hatTheta^{\cV \setminus k}$ in $O(p^2)$ time.

\section{Performance on 20-node graphs}\label{app:roc-noisy}

\rref{fig:noisy-results} shows that the various causal structure learning algorithms which we test perform similarly.

\begin{figure*}
    \centering
    \includegraphics[width=\textwidth]{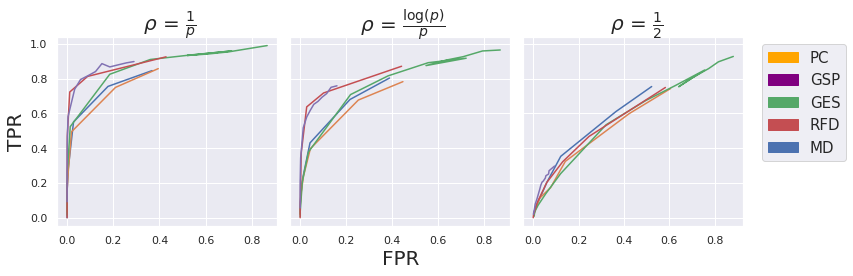}
    \caption{ROC curve}
    \label{fig:noisy-results}
\end{figure*}

\end{document}